\documentclass[conference]{IEEEtran}
\IEEEoverridecommandlockouts

\usepackage{quantikz}
\usepackage{amsmath, amssymb, amsthm, amsfonts}
\usepackage[ruled,vlined]{algorithm2e}
\usepackage[english]{babel}
\usepackage{listings}
\usepackage{tikz}
\usetikzlibrary{arrows.meta,positioning}
\theoremstyle{definition}
\newtheorem{definition}{Definition}[section]
\theoremstyle{remark}
\newtheorem{remark}{Remark}[section]
\newtheorem{example}{Example}[section]
\newtheorem{theorem}{Theorem}[section]
\usepackage{subcaption}

\usepackage{pgfplots}
\usepackage{graphicx}
\usepackage{url}
\usepackage{hyperref}
\usepackage{orcidlink}
\usepackage[capitalize]{cleveref}

\crefname{paragraph}{Paragraph}{Paragraphs}
\Crefname{paragraph}{Paragraph}{Paragraphs}

\newcommand{\ZERO}{\mathsf{ZERO}}
\newcommand{\ONE}{\mathsf{ONE}}
\newcommand{\PROB}{\mathsf{PROB}}

\tikzset{
    mybox/.style={
        draw,
        rounded corners,
        very thick,
        minimum width=2cm,
        minimum height=2cm,
        align=center,
        top color=#1!10,
        bottom color=#1!50
    }
}

\lstdefinelanguage{Pseudocode}{
  morekeywords={if,then,else,endif,while,do,endwhile,for,to,endfor,return,procedure,begin,end},
  sensitive=false,
  morecomment=[l]{//},
  morestring=[b]"
}

\usepackage{cite}
\usepackage{algorithmic}
\usepackage{graphicx}
\usepackage{textcomp}

\usepackage{xcolor}
\def\BibTeX{{\rm B\kern-.05em{\sc i\kern-.025em b}\kern-.08em
    T\kern-.1667em\lower.7ex\hbox{E}\kern-.125emX}}
\pgfplotsset{compat=1.18}
\begin{document}
\SetAlFnt{\small}
\newcommand{\mycommfont}[1]{\footnotesize\color{gray} #1}
\SetCommentSty{mycommfont}

\title{Compile-Time Simplification of Classically Controlled Operations in Dynamic Circuits

}

\author{\IEEEauthorblockN{Innocenzo Fulginiti \orcidlink{0000-0001-8818-9626}, Yanbin Chen \orcidlink{0000-0002-1123-1432}, Christian B.~Mendl \orcidlink{0000-0002-6386-0230}, and Helmut Seidl \orcidlink{0000-0002-2135-1593}}

\IEEEauthorblockA{TUM School of CIT, Technical University of Munich, Garching, Germany\\
\{innocenzo.fulginiti, yanbin.chen, christian.mendl, helmut.seidl\}@tum.de}}


\maketitle

\begin{abstract}
Dynamic circuits use real-time outcomes of mid-circuit measurements, processed by a classical controller, to adapt subsequent operations during circuit execution. This additional flexibility over static circuits comes at a price. Mid-circuit measurements are typically slower and noisier than unitary gates. Furthermore, classical feedforward requires exchanging information between the quantum processor (QPU) and the classical controller, introducing latency that erodes the practical performance of dynamic circuits. We propose a compile-time optimization framework that reduces the use of classical controls in dynamic circuits while preserving their semantics. At its core, the framework uses a static analysis that symbolically executes the circuit by propagating classical information alongside the quantum state. By combining this classical–quantum information with the Probabilistic Circuit Model extended with probabilistic controls that emulate classical feedforward, we obtain an intermediate probabilistic representation of the dynamic circuit. In this representation, mid-circuit measurements and classically controlled operations can be removed or rewritten as purely unitary operations and probabilistic components. Compared to existing compile-time optimizations that target only mid-circuit measurements, our method applies to a broader class of dynamic circuits expressible in modern quantum programming languages.
We evaluated our framework on randomly generated dynamic circuits, achieving about $50\%$ classical feedforward reduction and even higher reductions in favorable settings.
\end{abstract}

\begin{IEEEkeywords}
Quantum circuit compilation, dynamic circuit optimization, dynamic circuit simulation.

\end{IEEEkeywords}

\section{Introduction}\label{sec:intro}
Dynamic circuits enable decision-making during execution by conditioning subsequent operations on measurement outcomes obtained at runtime. They combine mid-circuit measurements with classical feedforward, allowing conditional quantum operations driven by classical logic \cite{cirqQuantumaiClassicalControl, ibmClassicalFeedforward}.
This classical–quantum interaction expands the space of implementable algorithms and supports techniques such as distributed quantum computing, error correction and mitigation, qubit reuse, and quantum teleportation \cite{dqc, qec_1, exploiting_dyn_circ, qubit_reuse_1, qubit_reuse_2, teleportation_Bennett_1895}.
Unlike static circuits, whose gate sequence is fixed before execution, dynamic circuits can alter the computation’s control flow from shot to shot as measurement-dependent conditions are evaluated. This makes their behavior closer to classical programs with branching and conditionals, and has motivated modern quantum programming languages to support explicit classical control constructs within quantum code \cite{ibmClassicalFeedforward, openqasmClassicalInstructions, cirqClassicallyControlledOperation}.
Despite their benefits, dynamic circuits come with significant trade-offs on current hardware. Mid-circuit measurements and resets are typically slower than unitary gates and can add substantial runtime overhead. Moreover, these operations often incur higher error rates due to readout noise and additional decoherence while waiting for outcomes, making reliable mid-circuit measurements and resets challenging on noisy quantum devices \cite{mcm_hw_1, mcm_hw_error}. Furthermore, classically controlled operations require a feedback loop between the QPU and the classical controller: outcomes must be routed out, processed, and translated into updated control signals, introducing latency.

Modern quantum programming languages let developers express quantum programs using classical-like constructs such as variables, functions, and control flow \cite{openqasmClassicalInstructions, openqasmOpenQASMLive}. Although this improves programmability, it can also encourage less optimized implementations, making it easier to introduce unnecessary or redundant classical controls in dynamic circuits.
Thus, it is highly desirable to reduce the reliance on classically controlled operations within dynamic circuits, as doing so can mitigate latency and control overheads.
The optimization pass proposed in \cite{big_prob} combines Quantum Constant Propagation (QCP) \cite{qcp}, a static-analysis technique that propagates information about the quantum state through the circuit, with the Probabilistic Circuit Model (PCM) \cite{prob, pcm}, an intermediate representation that assigns stochastic compile-time semantics to selected components, to reduce the number of mid-circuit measurements and resets in dynamic circuits. This QCP-PCM based pass focuses on simplifying measurements and resets, while classical controls and classically controlled operations are mostly left untouched.
This is because the algorithm does not track classical information during compile-time simulation. As a result, it does not simplify quantum operations that depend on classical conditions. Consequently, this QCP-PCM based pass supports only a narrow class of dynamic circuits, and does not accommodate the more advanced constructs offered by modern quantum programming languages.

In this work, we address these limitations by extending the framework to a broader class of dynamic circuits. Our approach propagates classical information during compile-time simulation, tracking the evolution of the classical register alongside the quantum state. We also extend the Probabilistic Circuit Model with probabilistic controls, a construct that, combined with the inferred classical information, captures the effect of classical feedforward within the PCM representation. This allows classically controlled operations to be removed or rewritten using purely unitary operations and probabilistic components, reducing reliance on classical control in dynamic circuits. The overall procedure consists of two phases (\cref{fig:compilation-diagram}). In Phase~\textit{I}, the input dynamic circuit $D$ is analyzed and symbolically simulated at compile time, producing a semantically equivalent probabilistic circuit $D'$. In this probabilistic representation, classically feedforward can be replaced using unitary operations, probabilistic gates, and probabilistic controls. The resulting circuit $D'$ serves as an intermediate representation: it captures the effect of classical feedforward via stochastic components, but it is not directly executable, since probabilistic gates and controls must be mapped to concrete quantum operations.
In Phase~\textit{II}, we generate executable instances $D''$ of the probabilistic circuit $D'$ by compiling its probabilistic constructs into standard quantum operations. Whenever $D'$ has to be executed, one such instance $D''$ is sampled from the family of circuits represented by $D'$. In other words, for every execution of the optimized probabilistic circuit, a new executable circuit instance is generated.
\tikzset{probbox/.style={dashed, dash pattern=on 4pt off 2pt}}
\begin{figure}
    \centering
    \resizebox{\columnwidth}{!}{
    \begin{tikzpicture}[node distance=4cm]

    \node[mybox=red] (input) {Input \\ circuit \\ $D$};
    \node[mybox=blue, probbox, right of=input] (probabilistic) {Probabilistic \\ circuit \\ $D'$};
    \node[mybox=purple,right of=probabilistic] (instance) {Executable \\ instances \\ $D''$};
    
    \draw[->,thick]  (input) -- (probabilistic)
    node[midway,above,sloped] {Phase \textit{I}};
    \draw[->,thick]  (probabilistic) -- (instance)
    node[midway,above,sloped] {Phase \textit{II}};
    
    
    \end{tikzpicture}
    }
        \caption{
        Compilation overview. Phase~\textit{I} transforms the input dynamic circuit $D$ into a semantically equivalent probabilistic circuit $D'$, where classical feedforward is represented via probabilistic constructs. Phase~\textit{II} instantiates $D'$ by compiling probabilistic constructs into standard operations, sampling an executable circuit $D''$ for each execution.
        }
        \label{fig:compilation-diagram}
\end{figure}

We implemented the proposed optimization procedure and evaluated it on a large dataset of randomly generated dynamic circuits to assess its effectiveness across circuit depths and widths. This experimental study quantifies the impact of our approach over a broad range of configurations. In addition, we present a case study on a dynamic circuit implementing an existing quantum algorithm from the literature.

\section{Preliminaries}\label{sec:prelim}
This work assumes that the reader is familiar with the fundamental concepts of quantum computing. For detailed background material, we refer the reader to textbooks such as \cite{nielsen_QC_2012, rieffel2000introduction, kaye2006introduction}.
Throughout this section, we fix notation and present the definitions that will be used in the rest of the paper.
\subsection{Quantum circuits and unitary operations}
A quantum circuit $D = (q,c)$ consists of a quantum register $q$, which is a sequence of qubits on which operations act, and a classical register $c$, which is a sequence of classical bits where outcomes of measurements performed on the qubits in $q$ are stored. If $q$ contains $n$ qubits we denote $\lvert q \rvert=n$, and $q_j$, for each $1 \leq j \leq n$, represents the $j$-th least significant qubit in $q$. In a similar way, if $c$ contains $m$ bits, we denote $\lvert c \rvert=m$, and $c_i$, for $1 \leq i \leq m$, is the $i$-th least significant bit in $c$.
\begin{remark}
    Given a quantum circuit $D = (q,c)$ we assume that each qubit $q_j$ is initialized in the quantum state $\ket{0}$ and each bit $c_i$ is initialized to $0$.
\end{remark}
\begin{definition}[Unitary Operation]
A unitary operation is the application of a $k$-qubit unitary operator $U$ to a tuple of qubits in the quantum register. We denote it by
$U(q_{r_1},\dots,q_{r_k})$ meaning that $U$ is applied to the qubits $q_{r_1},\dots,q_{r_k}$, where $1 \le r_\ell \le n$ for all $\ell \in \{1,\dots,k\}$ and the indices $r_1,\dots,r_k$ are pairwise distinct.
\end{definition}
\subsection{Dynamic circuits and classically controlled operations}
\label{subsec:dc_ccop}
Classically controlled operations are the essential building blocks of dynamic circuits. They allow the control flow of a quantum circuit to be modified at runtime based on classical information obtained by mid-circuit measurements.
\begin{definition}[Mid-circuit Measurement]
    A mid-circuit measurement is a measurement operation applied to a single qubit during circuit execution. It is denoted by $\mathrm{measure}(q_j, c_i)$,
meaning that qubit $q_j$ is measured in the computational basis $\{\ket{0},\ket{1}\}$ and the outcome is stored in classical bit $c_i$. Mid-circuit measurements are typically distinguished from final measurements, which are performed only at the end of the circuit. In this paper, the term measurement refers exclusively to mid-circuit measurement.
\end{definition}
\begin{definition}[Reset]
    A reset is a non-unitary operation that forces a qubit $q_j$ into the state $\ket{0}$, regardless of its prior state. We denote it by $\mathrm{reset}(q_j)$. Reset operations are used to reinitialize the state of a qubit (e.g., after a measurement).
\end{definition}
\begin{definition}[Classically controlled operation]
    A classically controlled operation consists of a classical condition on bits of a classical register, a \texttt{then} branch that is executed if the condition is satisfied, and an (optional) \texttt{else} branch that is executed if the condition is not satisfied. The syntactic structure of the classically controlled operations considered in this work is illustrated in \cref{fig:cls_ctrl_op}.
\end{definition}
\begin{remark}
In this paper, we restrict both the \texttt{then} and \texttt{else} branches to sequences of unitary operations only. In particular, branches do not contain mid-circuit measurements, resets, or further classically controlled operations.
\end{remark}
Tools such as OpenQASM 3.0, Cirq, and recent versions of Qiskit have extended support for expressing high-level conditions in a style closer to conventional programming languages for classical programs \cite{openqasmClassicalInstructions,cirqClassicallyControlledOperation,cirqQuantumaiClassicalControl,ibmClassicalFeedforward}. We can identify three syntactic patterns for writing conditions, namely single-bit comparison, full register comparison, and boolean predicate condition, that appear frequently in practice and are supported by the aforementioned languages.
\subsubsection{Single-bit comparison}
This widely used pattern involves conditions on a single bit of the form $c_i = 0$ or $c_i = 1$, where $c_i$ denotes the $i$-th bit of a classical register $c$.
\subsubsection{Full-register comparison}
In this pattern, a whole classical register $c$ of $m$ bits is compared with an integer value ranging from $0$ to $2^m -1$, in other words, a nonnegative integer of $m$ bits. (This is the style used in QASM 2.0 \cite{qasm2}).
\subsubsection{Boolean predicate condition}
This pattern employs boolean operators like \textit{and} $(\land)$, \textit{or} ($\lor$), \textit{xor} ($\oplus$) and \textit{not} $(\neg)$ to define boolean expressions of bits $c_i$ of a classical register. We consider boolean expressions generated by the rule
\begin{equation}
\label{eq:bool_exp}
E ::= c_i 
    \,\mid\, \neg E 
    \,\mid\, E_1 \land E_2 
    \,\mid\, E_1 \lor E_2 
    \,\mid\, E_1 \oplus E_2 \,.
\end{equation}
\begin{definition}[Dynamic Circuit]
    A dynamic circuit is a quantum circuit that interleaves unitary operations with mid-circuit measurements and classically controlled operations. Outcome values from mid-circuit measurements are stored in a classical register and are used to determine which operations are executed next, allowing branches within the computation.
\end{definition}

\begin{figure}
    \centering
\begin{lstlisting}
if cond then
    // A sequence of unitary operations executed if cond is satisfied
else
    // A sequence of unitary operations executed if cond is not satisfied
\end{lstlisting}
    \caption{Generic pattern for a classically controlled operation.}
    \label{fig:cls_ctrl_op}
\end{figure}
\subsection{Compile time and runtime for quantum circuits}

\begin{definition}[Compile time]
Compile time is the phase in which a quantum circuit is processed by classical software for compilation and optimization, producing an executable representation for a target backend.
\end{definition}

\begin{definition}[Runtime]
Runtime is the phase in which the compiled quantum circuit is executed on the quantum device.
\end{definition}

\subsection{Quantum Constant Propagation (QCP)}
\label{subsec:qcp}
Quantum Constant Propagation (QCP) is a static-analysis technique that was designed to exploit compile-time quantum constant information to simplify operations in quantum circuits \cite{qcp}. Assuming a fixed initial state (e.g., all qubits initialized to $\ket{0}$), QCP propagates static information about qubit states through the circuit. To avoid the exponential cost of full state simulation in the presence of entanglement, QCP tracks entangled states only up to a fixed threshold, which bounds the size of entanglement groups. When an entanglement group exceeds this threshold, QCP stops tracking it and assigns the unknown abstract state $\top$ to all qubits in the group. This restricted simulation yields a polynomial-time procedure in the circuit size. Indeed, for a circuit with $n$ qubits and $g$ gates, the analysis runs in $\mathcal{O}(n \cdot g)$ time.

In our framework, we build on QCP as the foundational static analysis to simulate the circuit’s quantum state and to track compile-time information within the quantum register.

\subsection{Probabilistic Circuit Model (PCM)}
The Probabilistic Circuit Model (PCM) provides an intermediate representation for quantum circuits that include probabilistic gates, i.e., components whose behavior is determined stochastically at compile time \cite{big_prob, prob, pcm}.
\begin{definition}[Probabilistic Gate]
Following the formulation in \cite{big_prob}, a probabilistic gate acting on $k$ qubits is written as
 \begin{equation}
     \mathfrak{P}[(S_1, p_1), \dots, (S_l, p_l)](q_{r_1}, \dots, q_{r_k})\,,
 \end{equation} where $0\le p_i\le 1$ are probability values, such that $\sum_{i=1}^{l} p_i = 1$. Each $S_i$ is a $k$-qubit unitary operator acting on the qubits $q_{r_1}, \dots, q_{r_k}$.
\end{definition}
\begin{example}
    The 2-qubit probabilistic gate
    $\mathfrak{P}[(I \otimes H, p_1), (SWAP, p_2), (X \otimes Z, p_3)](q_2, q_4)$
    is instantiated at compile time by sampling one of its alternatives: with probability $p_1$ it applies $I$ to $q_2$ and $H$ to $q_4$; with probability $p_2$ it applies $SWAP$ to $(q_2,q_4)$; and with probability $p_3$ it applies $X$ to $q_2$ and $Z$ to $q_4$.
\end{example}

\begin{definition}[Probabilistic Circuit]
A probabilistic circuit is a quantum circuit that supports probabilistic gates.
\end{definition}
For each execution of a probabilistic circuit, every probabilistic gate is instantiated by sampling from its associated distribution, yielding a concrete (non-probabilistic) gate sequence. Hence, each execution corresponds to a (possibly different) circuit derived from the same stochastic specification.

\subsection{State Preparation and Rotation Unitaries}
\label{subsec:state-prep-rotation}
We write an $n$-qubit state as
$
    \ket{\psi} = \sum_{b \in \{0,1\}^n} \alpha_b \ket{b},
$
where $b \in \{0,1\}^n$ is an $n$-bit string, $\ket{b}$ denotes the corresponding computational basis state, and $\sum_{b \in \{0,1\}^n} |\alpha_b|^2 = 1$.
\begin{definition}[Size of a state]
    \label{def:size-state}
    The size of $\ket{\psi} = \sum_{b \in \{0,1\}^n} \alpha_b \ket{b}$, denoted by $\lvert \ket{\psi} \rvert$, is the number of computational basis states $\ket{b}$ with nonzero amplitude, i.e., $\alpha_b \neq 0$.
\end{definition}
State preparation is a basic building block in many quantum algorithms and consists of a procedure that, given an $n$-qubit quantum state $\ket{\psi}$, allows one to construct a circuit $C_{\ket{\psi}}$ such that $C_{\ket{\psi}} \ket{0}^{\otimes n} = \ket{\psi}.$ In general, preparing an arbitrary $n$-qubit state may require a circuit whose depth and gate count scale as $\mathcal{O}(2^n)$ in the worst case, although different algorithms for state preparation have been proposed that achieve more efficient performance \cite{state_prep_1, state_prep_2, state_prep_qiskit}.

In what follows, we will treat state preparation as a primitive to construct rotation unitaries.

\begin{definition}[Rotation Unitary]
\label{def:rotation-gate}
Let $\ket{\psi}$ and $\ket{\phi}$ be two $n$-qubit states, and let $C_{\ket{\psi}}$ and $C_{\ket{\phi}}$ be state-preparation circuits for $\ket{\psi}$ and $\ket{\phi}$, respectively. We define the rotation unitary from $\ket{\psi}$ to $\ket{\phi}$ as the $n$-qubit circuit
\(
R_{\ket{\psi} \rightarrow \ket{\phi}} \coloneqq C_{\ket{\phi}} \cdot C_{\ket{\psi}}^{-1},
\)
so that $R_{\ket{\psi} \rightarrow \ket{\phi}} \ket{\psi} = \ket{\phi}$.
Here $C_{\ket{\psi}}^{-1}$ denotes the inverse unitary of $C_{\ket{\psi}}$, and $(\cdot)$ denotes sequential circuit concatenation.
\end{definition}

\subsection{Removing mid-circuit measurements with PCM}
\label{subsec:remove-mcm-pcm}
Leveraging its stochastic structure, PCM has been used to implement a compile-time simplification pass in which mid-circuit measurements are replaced by probabilistic gates \cite{prob, big_prob}. The central idea of this pass is to utilize the quantum information obtained from simulating the input circuit through QCP to reduce the number of measurement operations.
\begin{example}
\label{ex:mcm-rem-qcp-pcm}
    As a small illustrative example, consider the $3$-qubit quantum circuit $D = (q, c)$, defined as
\begin{lstlisting}
$D.X(q_1)$
$D.H(q_2)$
$D.CX(q_1, q_3)$
$D.\mathrm{measure}(q_2, c_2)$
\end{lstlisting}
Assuming the register ordering $\ket{q_3}\ket{q_2}\ket{q_1}$ and the initial state $\ket{0}\ket{0}\ket{0}$, QCP simulation shows that immediately before $D.\mathrm{measure}(q_2,c_2)$, the register $q$ is in the state $\ket{1}\ket{+}\ket{1}$. Measuring $q_2$ in the computational basis therefore returns $0$ or $1$ with probability $1/2$ each. Following \cite{big_prob}, this effect can be emulated by replacing the measurement with a rotation $D.R_{\ket{+}\rightarrow\ket{0}}(q_2)$, which maps $\ket{+}$ to $\ket{0}$, followed by the probabilistic gate $D.\mathfrak{P}[(I,0.5),(X,0.5)](q_2)$. Since the measurement outcome stored in $c_2$ is not used by subsequent classical controls, the measurement can be eliminated without changing the remaining computation; if outcomes are later consumed, the QCP-PCM pass cannot generally apply this rewrite because it does not track classical information at compile time.
\end{example}
\cref{ex:mcm-rem-qcp-pcm} illustrates the QCP-PCM construction in the simplest setting, where the measured qubit is not entangled with the rest of the register. When the measured qubit belongs to an entangled group, however, the measurement generally affects the entire group: measuring one subsystem can leave the remaining, unmeasured qubits in a mixed post-measurement state. The QCP-PCM method described in \cite{big_prob} can replace measurements on entangled qubits and correctly emulate the effect of the measurement on the state of the measured qubit, but it does not, in general, preserve the correct post-measurement state of the unmeasured qubits. Instead, the method collapses the entire entangled group into a single compatible computational basis state, rather than reproducing the intended mixture. Since a faithful account of this behavior is important for semantic correctness, in \cref{subsec:ref_rep_mcmres} we refine this construction, providing a consistent formulation.





\subsection{Limitations of QCP-PCM-Based Method}
\label{subsec:limits-qcp-pcm}
Dynamic circuits can measure qubits, store outcomes in a classical register, and use these values to guard subsequent classically controlled operations. As shown in \cref{subsec:remove-mcm-pcm}, the quantum-state update induced by a measurement (i.e., collapse according to the outcome distribution) can be emulated using rotations and probabilistic gates. However, probabilistic gates cannot reproduce the classical side effects of measurement, namely writing the outcome into the classical register, and they cannot natively represent classical computation over stored outcomes. Consequently, the method in \cite{big_prob} does not fully support general classically controlled operations.
Concretely, it can handle only components of the following restricted form:
\begin{equation*}
    \centering
    \begin{quantikz}[row sep={10mm,between origins}]
          \lstick{$q_{r_1}$} & \meter{} \wire[d][1]{c} & \qw \\
            \lstick{$q_{r_2}$}   & \gate{U}                & \qw
    \end{quantikz} \,,
\end{equation*}
where the unitary $U$ is applied if and only if the measurement outcome of $q_{r_1}$ is $1$. Suppose the state of $q_{r_1}$ before the measurement is known to be $\ket{\psi}$ and $U$ acts on some target qubit $q_{r_2}$. The pass replaces the measurement by a rotation $R_{\ket{\psi}\rightarrow\ket{0}}(q_{r_1})$ followed by a probabilistic gate $\mathfrak{P}[(I,p_1),(X,p_2)](q_{r_1})$, and rewrites $U(q_{r_2})$ as its quantum-controlled version $CU(q_{r_1},q_{r_2})$, with $q_{r_1}$ as control qubit. In this way, $CU$ is applied exactly when the probabilistic gate is instantiated to $X$, emulating the original measurement-conditioned execution using unitary operations only.

This construction crucially relies on the measurement outcome being consumed immediately and once, as a direct guard for a single unitary, with no intermediate classical processing, storage, or reuse of the corresponding classical bit. As soon as outcomes are combined with other results, reused later in the circuit, or appear inside more general Boolean conditions, the QCP-PCM method has no mechanism to propagate and update the associated classical information. As a consequence, it does not extend directly to the broader class of classically controlled operations that arise in realistic dynamic circuits.

\section{Method}\label{sec:methods}
In this section, we present our optimization pass for simplifying classically controlled operations in dynamic circuits. We extend the Probabilistic Circuit Model with probabilistic controls (\cref{subsec:prob_controls}), which provide an abstract representation of classical controls. We then introduce Classical Constant Propagation (CCP) (\cref{subsec:ccp}), a static-analysis procedure that tracks classical information and is used alongside QCP during compile-time simulation. Using the information inferred by CCP, we derive rewriting rules to simplify the conditions of classically controlled operations (\cref{subsec:ccop-simp}). The overall optimization pass is presented in \cref{subsec:overall}, and its correctness is discussed in \cref{sec:method-correctness}.

\subsection{Refined rewriting of measurements and resets with PCM}
\label{subsec:ref_rep_mcmres}
We describe a construction for emulating the effect of measurements and resets on the quantum state of a circuit with PCM, which applies in the case where the measured qubit is entangled with other qubits, refining the procedures in \cite{big_prob}.
\subsubsection{Pattern for mid-circuit measurements}
\label{subsub:patt_mcm_emu}
Let $q_j$ be a qubit to be measured that is part of an $n$-qubit state $\ket{\psi}~=~\sum_{b \in \{0,1\}^n} \alpha_b \ket{b}$, over the qubits $\{q_1,\dots,q_n\}$. The state $\ket{\psi}$ can be decomposed with respect to $q_j$ as
$
\ket{\psi}
= \lambda_0 \ket{0}_{q_j}\ket{\phi_0} + \lambda_1 \ket{1}_{q_j}\ket{\phi_1},
$
where $|\lambda_0|^2 + |\lambda_1|^2=1$, and $\ket{\phi_0},\,\ket{\phi_1}$ are normalized $(n-1)$-qubit states representing, respectively, the state of the remaining qubits in the entanglement group when the measured qubit $q_j$ collapses to $\ket{0}$ or to $\ket{1}$ after the measurement. We define two rotation unitaries
\begin{equation}
\label{eq:rot-gate-in-prob-gate}
    \mathcal{R}_{\ket{0}} \coloneqq R_{\ket{0}^{\otimes (n-1)} \rightarrow \ket{\phi_0}},
\qquad
\mathcal{R}_{\ket{1}} \coloneqq R_{\ket{0}^{\otimes (n-1)} \rightarrow \ket{\phi_1}},
\end{equation}
such that $\mathcal{R}_{\ket{0}}\ket{0}^{\otimes (n-1)} = \ket{\phi_0}$ and $\mathcal{R}_{\ket{1}}\ket{0}^{\otimes (n-1)} = \ket{\phi_1}$.
As in the QCP-PCM construction of \cite{big_prob}, to replace the measurement, we first apply a rotation unitary $R_{\ket{\psi} \rightarrow \ket{0}^{\otimes n}}$ acting on all $n$ qubits, mapping $\ket{\psi}$ to $\ket{0}^{\otimes n}$. In \cite{big_prob}, this is followed by a probabilistic gate that samples one of the computational basis states compatible with $\ket{\psi}$, effectively replacing the post-measurement state with only a single computational basis state. Here, instead, given $\mathcal{R}_{\ket{0}},\,\mathcal{R}_{\ket{1}}$ (\cref{eq:rot-gate-in-prob-gate}), let $(q_{r_1},\dots,q_{r_{n-1}})$ be an arbitrary ordering of the qubits in $\{q_1,\dots,q_n\} \setminus \{q_j\}$, we define a probabilistic gate that correctly reconstructs the two post-measurement branches:
\begin{equation}
\label{eq:probgate_for_meas}
 \mathfrak{P}[(I \otimes \mathcal{R}_{\ket{0}}, |\lambda_0|^2), (X \otimes \mathcal{R}_{\ket{1}}, |\lambda_1|^2)](q_j, q_{r_1},\dots,q_{r_{n-1}}) \,.   
\end{equation}
With this construction, the probabilistic gate, when instantiated, reproduces one of the two post-measurement states of the $n$-qubit system initially in the state $\ket{\psi}$.

\subsubsection{Pattern for resets}
\label{subsub:patt_res_emu}
An analogous construction emulates resets. Suppose we reset qubit $q_j$ in the same $n$-qubit entangled state $\ket{\psi}$ as above.
To replace the reset, we again apply $R_{\ket{\psi} \rightarrow \ket{0}^{\otimes n}}$. In \cite{big_prob}, this step is followed by the assumption that the reset deterministically prepares the branch $\ket{0}\ket{\phi_0}$ without sampling. Here, instead, we preserve the correct probabilistic behavior by applying the probabilistic gate
\begin{equation}
    \label{eq:probgate_for_res} \mathfrak{P}[(I \otimes \mathcal{R}_{\ket{0}}, |\lambda_0|^2), (I \otimes \mathcal{R}_{\ket{1}}, |\lambda_1|^2)](q_j, q_{r_1},\dots,q_{r_{n-1}}) \,.
\end{equation}
In both branches, $q_j$ remains in state $\ket{0}$, while the remaining $n-1$ qubits are prepared in $\ket{\phi_0}$ or $\ket{\phi_1}$ with their probabilities, reproducing the post-reset behavior more accurately.
\subsubsection{Computational cost of rewriting patterns}
\label{subsec:comp-cost-rewriting}
    As analyzed in \cite{big_prob}, the dominant cost of these rewriting patterns is the synthesis of the rotation unitary: its complexity scales linearly with the number of computational-basis states in $\ket{\psi}$, and may therefore be exponential in the number of qubits in the worst case. For this reason, following \cite{big_prob}, we can impose a threshold on $\lvert \ket{\psi} \rvert$. If the threshold is exceeded, we do not replace the corresponding measurement or reset.

\subsection{PCM equipped with Probabilistic Controls}
\label{subsec:prob_controls}
In \cref{subsec:limits-qcp-pcm}, we discussed the simplified classically controlled operation case, where a single measurement outcome directly guards a unitary $U$. In that setting, the original QCP-PCM method replaces the measurement with a probabilistic gate and rewrites $U$ as a quantum-controlled operation $CU$. A naive extension to conditions depending on multiple classical bits would promote $U$ to a multi-controlled gate (e.g., $CCU$ for two qubits, and, more generally, a $k$-controlled gate for $k$ qubits). Such additional quantum controls increase gate cost and spread control dependencies across larger portions of the circuit.
More importantly, the original QCP-PCM abstraction supports only a restricted form of classical control: it effectively collapses the actual pattern of measuring a qubit, storing the outcome in a classical bit, and later using that value in a classical condition, into a single atomic component, thereby hiding the underlying classical control structure and preventing support for the richer conditional patterns illustrated in \cref{subsec:dc_ccop}. To capture these patterns explicitly, we introduce probabilistic controls, a new component of PCM.
\begin{definition}[Probabilistic control]
A probabilistic control is an abstract compile-time bit used in PCM to emulate the effect that a classical bit would have on the circuit control flow. A probabilistic control, denoted by $\mathfrak{C}_i$, takes values in $\{\mathfrak{0},\mathfrak{1}\}$, where $\mathfrak{C}_i = \mathfrak{0}$ and $\mathfrak{C}_i = \mathfrak{1}$ correspond, respectively, to the classical bit values $0$ and $1$ that would be stored in position $i$ of the classical register in the original dynamic circuit.
\end{definition}
\begin{definition}[Probabilistic gate-control binding]
    \label{def:gate-contr-bind}
    Consider a probabilistic control $\mathfrak{C}_i$, and a probabilistic gate $\mathfrak{P}[(I \otimes \mathcal{R}_{\ket{0}}, p_1), (X \otimes \mathcal{R}_{\ket{1}}, p_2)](q_j,\dots)$ as defined in \cref{eq:probgate_for_meas} for emulating mid-circuit measurements. We denote by
    $$
    \mathfrak{P}[(I \otimes \mathcal{R}_{\ket{0}}, p_1), (X \otimes \mathcal{R}_{\ket{1}}, p_2)](q_j,\dots)
    \rightarrow
    (\mathfrak{C}_i, q_j)
    $$
    the probabilistic gate-control binding between $\mathfrak{P}$ and $\mathfrak{C}_i$. The binding expresses a dependency between the sampling of $\mathfrak{P}$ and the value taken by $\mathfrak{C}_i$. When $\mathfrak{P}$ is instantiated, $\mathfrak{C}_i$ is set according to: 
    \begin{equation}
    \label{eq:prob-ctrl-update}
        \mathfrak{C}_i =
    \begin{cases}
        \mathfrak{0} & \text{if $I \otimes \mathcal{R}_{\ket{0}}$ is chosen},\\
        \mathfrak{1} & \text{if $X \otimes \mathcal{R}_{\ket{1}}$ is chosen}.
    \end{cases}
    \end{equation}

Thus, $\mathfrak{C}_i$ acts as a compile-time surrogate for the classical bit $c_i$, recording the value that the measurement of $q_j$ would write into the classical register.
\end{definition}
Probabilistic controls provide an abstract representation of measurement outcomes that would otherwise appear as explicit classical control signals, allowing us to emulate their effects when simplifying classically controlled operations. To exploit them systematically, we also track how classical information evolves throughout the circuit; this is performed by the Classical Constant Propagation analysis introduced in the next subsection.

\subsection{Classical Constant Propagation (CCP)}
\label{subsec:ccp}
Consider a dynamic circuit $D=(q,c)$ with $\lvert c \rvert = m$. The classical register $c$ is updated whenever a measurement writes its outcome into a classical bit $c_i$ of $c$; otherwise, its value remains unchanged.

We represent the abstract classical state of the register by associating to each bit $c_i$ a token from the finite set
\[
\mathsf{B} \coloneqq \{\ZERO,\ONE,\PROB,\top\}.
\]
Intuitively, $\ZERO$ and $\ONE$ indicate that the value of a bit $c_i$ is certainly $0$ and $1$, respectively, at a given circuit point. $\PROB$ indicates that $c_i$ is not fixed but behaves as a random bit whose value $0$ or $1$ follows a known probability distribution (e.g., $0$ and $1$ with probability $0.5$ after measuring a qubit in state $\ket{+}$). Finally, $\top$ represents a complete lack of information about the state of $c_i$.
We model the abstract state of $c$ as a function
\[
\sigma_c : \{1,\dots,m\} \to \mathsf{B},
\]
where $\sigma_c(i)$ is the token associated with the $i$-th bit of $c$.

At the beginning of the simulation, CCP initializes the abstract state according to the initial state of the circuit's classical register (e.g., all classical bits set to $0$, hence $\sigma_c(i) = \ZERO$ for all $i \in \{1,\dots,m\}$). As CCP simulates the dynamic circuit $D$, it updates $\sigma_c$ only when a measurement writes to the classical register. When a measurement $\mathrm{measure}(q_j, c_i)$ is encountered, CCP uses the information about the current quantum state of $q_j$ at that program point to update the token of the target bit $c_i$ accordingly. If it is inferred that the state of $q_j$ is deterministically $\ket{0}$ (respectively, $\ket{1}$), then $\sigma_c(i)$ is set to $\ZERO$ (respectively, $\ONE$). If the state of the qubit lies in a superposition, then $\sigma_c(i)$ is updated to $\PROB$. Finally, when no information on the state of $q_j$ is available, CCP sets $\sigma_c(i)$ to $\top$. All other entries $\sigma_c(k)$ with $k \neq i$ remain unchanged. In this way, CCP tracks the evolution of the classical information and exposes the current abstract state $\sigma_c$ at each point of the circuit. The effect of a measurement on the abstract classical state $\sigma_c$ is captured by the CCP update rule in \cref{alg:ccp}.
\begin{algorithm}[t]
\caption{CCP update for a measurement}
\label{alg:ccp}
\KwData{$\sigma_q$, abstract state of the quantum register $q$. $\sigma_c : \{1,\dots,m\} \to \mathsf{B}$, abstract state of the classical register $c$.}
\KwIn{A measurement instruction $\mathrm{measure}(q_j, c_i)$.}
\KwResult{Updated $\sigma_c$.}

\Switch{$\sigma_q(j)$}{
\uCase{$\ket{0}$}{$\sigma_c(i) \gets \ZERO$;}
\uCase{$\ket{1}$}{$\sigma_c(i) \gets \ONE$;}
\uCase{$\top$}{$\sigma_c(i) \gets \top$;}
\uCase{otherwise}{$\sigma_c(i) \gets \PROB$;}
}
\end{algorithm}

\subsection{Simplifying classically controlled operation conditions}
\label{subsec:ccop-simp}
We now define a static simplification function for the conditions of classically controlled operations, using the abstract classical state inferred for the dynamic circuit. The function applies to all condition patterns introduced in \cref{subsec:dc_ccop}.

\begin{definition}[Normal form for conditions]
\label{def:cond-norm-form}
Recall the grammar for Boolean predicate conditions from \cref{subsec:dc_ccop}. As an internal normal form used by our optimization pass, we represent all supported conditions as Boolean expressions generated by this grammar. In particular:
\begin{itemize}
    \item a single-bit comparison is translated as
    \[
        (c_i = 1) \;\equiv\; c_i,
        \qquad
        (c_i = 0) \;\equiv\; \neg c_i;
    \]
    \item a full-register comparison $c = val$, where $c$ is an $m$-bit register and $val \in \{0,\dots,2^m - 1\}$ has binary expansion $val_m \dots val_1$ with $val_i \in \{0,1\}$, is translated as
     \[
        (c = val) 
        \;\equiv\;
        \bigwedge_{i=1}^{m} E_i,
        \quad
        E_i \coloneqq 
        \begin{cases}
            c_i & \text{if } val_i = 1,\\
            \neg c_i & \text{if } val_i = 0.
        \end{cases}
    \]
\end{itemize}
We say that a condition is in normal form if it is expressed as a Boolean expression $E$ obtained by these translations.
\end{definition}

\begin{definition}[Probabilistic condition]
\label{def:prob-cond}
    A probabilistic condition is any Boolean expression over both classical and probabilistic controls generated by
$$
    E' \;::=\; c_i 
      \,\mid\, \mathfrak{C}_i
      \,\mid\, \neg E' 
      \,\mid\, E'_1 \land E'_2 
      \,\mid\, E'_1 \lor E'_2 
      \,\mid\, E'_1 \oplus E'_2,
$$
\end{definition}
Let $\mathcal{E}$ be the set of Boolean expressions $E$ (as in \cref{def:cond-norm-form}) and let $\mathcal{E}'$ be the set of probabilistic expressions $E'$ (as in \cref{def:prob-cond}). We define the abstract simplification function
\begin{equation}
\label{eq:simp-fun-cond}
\mathcal{S}_{\sigma_c} : \mathcal{E} \to \{true, false\} \cup \mathcal{E}' ,
\end{equation}
Intuitively, $\mathcal{S}_{\sigma_c}(E)$ simplifies $E$ recursively using the abstract information in $\sigma_c$. If $\mathcal{S}_{\sigma_c}(E)$ returns $true$ or $false$, the expression $E$ is statically evaluated at compile time. Otherwise, the output is a possibly simplified probabilistic condition $E' \in \mathcal{E}'$ in which: bits known to be constantly $\ZERO$ or $\ONE$ have been eliminated; bits in state $\PROB$ have been replaced by probabilistic controls $\mathfrak{C}_i$ (or their negations); bits in $\top$ remain as unresolved atoms $c_i$.

The definition of $\mathcal{S}_{\sigma_c}$ proceeds by structural induction on $E$. For atomic expressions, we have:
\[
    \mathcal{S}_{\sigma_c}(c_i) \;=\;
    \begin{cases}
        true & \text{if } \sigma_c(i) = \ONE,\\[2pt]
        false & \text{if } \sigma_c(i) = \ZERO,\\[2pt]
        \mathfrak{C}_i & \text{if } \sigma_c(i) = \PROB,\\[2pt]
        c_i & \text{if } \sigma_c(i) = \top.
    \end{cases}
\]
For the unary negation operator, we define:
\[
    \mathcal{S}_{\sigma_c}(\neg E) =
    \begin{cases}
        false & \text{if } \mathcal{S}_{\sigma_c}(E) = true,\\[2pt]
        true & \text{if } \mathcal{S}_{\sigma_c}(E) = false,\\[2pt]
        \neg E' & \text{if } \mathcal{S}_{\sigma_c}(E) = E' \in \mathcal{E}'.
    \end{cases}
\]

For the binary operators $(\land)$ and $(\lor)$ we use short-circuit identities and constant folding \cite{short-circ-eval}. For the $(\land)$ operator, given
$E_1' = \mathcal{S}_{\sigma_c}(E_1)$ and $E_2' = \mathcal{S}_{\sigma_c}(E_2)$, we define
\[
\mathcal{S}_{\sigma_c}(E_1 \land E_2) \;=\;
\begin{cases}
false       & \text{if } E_1' = false \text{ or } E_2' = false,\\[4pt]
E_2'                 & \text{if } E_1' = true,\\[4pt]
E_1'                 & \text{if } E_2' = true,\\[4pt]
E_1' \land E_2'      & \text{otherwise.}
\end{cases}
\]

For $(\lor)$, we use the corresponding short-circuit identities (i.e., $E \lor true \equiv true$ and $E \lor false \equiv E$); otherwise we return $E_1' \lor E_2'$. For $(\oplus)$, we apply constant folding using the identities $E \oplus false \equiv E$ and $E \oplus true \equiv \neg E$, and we omit the full case analysis for brevity.

\subsection{Overall procedure}
\label{subsec:overall}
\SetKwFunction{Simplify}{Simplify}
We now present the overall optimization scheme proposed in this paper. Let $D=(q,c)$ be a dynamic circuit with $|q|=n$ qubits and $|c|=m$ classical bits. The procedure consists of two phases. In Phase~\textit{I}, the input circuit $D$ is rewritten into a semantically equivalent probabilistic circuit $D'$ in which dynamic components are removed or replaced by probabilistic constructs (\cref{subsub:phase1}). In Phase~\textit{II}, executable circuits $D''$ are obtained by sampling and resolving all probabilistic components in $D'$, yielding a concrete instance to run on a backend (\cref{subsub:phase2}).

\subsubsection{Phase~\textit{I}: from dynamic to probabilistic circuits}
\label{subsub:phase1}
Phase~\textit{I} constitutes the core of our optimization framework. In this phase, the input dynamic circuit $D$ is symbolically simulated at compile time on a classical computer in order to infer both quantum and classical information that can be exploited for simplification. The goal is to determine which dynamic components, in particular mid-circuit measurements, resets, and classically controlled operations, can be removed altogether, and which ones can be rewritten into constructs of PCM. The procedure for Phase~\textit{I} is summarized in \cref{alg:phase1}.

Phase~\textit{I} takes as input a dynamic circuit $D=(q,c)$ with $|q|=n$ qubits and $|c|=m$ classical bits, together with two abstract states: $\sigma_q$, which summarizes compile-time information about the quantum register $q$, and $\sigma_c$, which summarizes compile-time information about the classical register $c$. Both $\sigma_q$ and $\sigma_c$ are initialized to match the circuit initialization. The algorithm then scans $D$ instruction-by-instruction, updating $\sigma_q$ via QCP~\cite{qcp} and $\sigma_c$ via CCP (\cref{subsec:ccp}), thereby tracking quantum and classical information throughout the circuit. 
\begin{algorithm}
\SetInd{0.05em}{0.5em}
    \caption{Phase~\textit{I} algorithm (\cref{subsub:phase1}).}
    \label{alg:phase1}
    \KwData{A dynamic circuit $D=(q,c)$, with $|q|=n$ and $|c|=m$. $\sigma_q$ containing information on the state of $q$, and $\sigma_c$ containing information on the state of $c$.}
    \KwResult{A probabilistic circuit $D'$ obtained from $D$}

    \ForEach{$inst \in D.instructions$}{
    \Switch{$inst$}{
        \uCase{$inst$ is a unitary operation}{
        \tcp{Update $\sigma_q$ with QCP \cite{qcp} and append $inst$ to $D'$}
            $\sigma_q \gets \sigma_q.\texttt{update}(inst)$ \\
            $D'.\texttt{append}(inst)$
        }
        \uCase{$inst \equiv \mathrm{measure}(q_j, c_i)$}{
            \uIf{$\sigma_q(j) = \top$} {
                $D'.\texttt{append}(\mathrm{measure}(q_j, c_i))$ \tcp{The meas. is kept in $D'$}
                $\sigma_c \gets \sigma_c.\texttt{update}(\mathrm{measure}(q_j, c_i))$ \tcp{\cref{alg:ccp}}
            }
            \uElseIf{$\sigma_q(j) \in \{\ket{0},\ket{1}\}$}{
            \tcp{Outcome deterministic; update $\sigma_c$ and drop the meas.}
            $\sigma_c \gets \sigma_c.\texttt{update}(\mathrm{measure}(q_j, c_i))$ \tcp{\cref{alg:ccp}}
            }
            \uElse{
                $G \gets \sigma_q.\texttt{group}(q_j)$ \tcp{qubits in entanglement group of $q_j$}
                $D'.\texttt{append}(R_{\sigma_q(j) \rightarrow \ket{0}^{\otimes |G|}}(G))$\\
                $D'.\texttt{append}(\mathfrak{P}[\dots](q_j, \dots) \rightarrow (\mathfrak{C}_i, q_j))$\\
                $\sigma_c \gets \sigma_c.\texttt{update}(\mathrm{measure}(q_j, c_i))$ \tcp{\cref{alg:ccp}}
            $\sigma_q.\texttt{set\_top}(G)$ \tcp{Set to $\top$ all qubits in $G$}
            }
        }
    \uCase{$inst \equiv \mathrm{reset}(q_j)$}{
        \uIf{$\sigma_q(j)=\ket{0}$}{
            \textbf{skip} \tcp{$q_j$ already in $\ket{0}$}
        }
        \uElseIf{$\sigma_q(j)=\ket{1}$}{
            $D'.\texttt{append}(X(q_j))$ \tcp{Apply $X(\ket{1}) \rightarrow \ket{0}$}
        }
        \uElseIf{$\sigma_q(j)=\top$}{
            $D'.\texttt{append}(\mathrm{reset}(q_j))$ \tcp{The reset is kept}
        }
        \uElse {
        $G \gets \sigma_q.\texttt{group}(q_j)$ \tcp{qubits in entanglement group of $q_j$}
                $D'.\texttt{append}(R_{\sigma_q(j) \rightarrow \ket{0}^{\otimes |G|}}(G))$\\
            $D'.\texttt{append}(\mathfrak{P}[\dots](q_j, \dots))$\\
            $\sigma_q.\texttt{set\_top}(G)$ \tcp{Set to $\top$ all qubits in $G$}
        }
        $\sigma_q.\texttt{initialize}(q_j)$ \tcp{Set $\sigma_q(j) \gets \ket{0}$}
    }
    \uCase{$inst \equiv \mathrm{if} \; \texttt{cond} \;\mathrm{then} \; \texttt{B}_{then} \; \mathrm{else} \; \texttt{B}_{else}$}{
        $\texttt{cond}' \gets \mathcal{S}_{\sigma_c}(\texttt{cond})$ \tcp{Simplify $\texttt{cond}$}
        \uIf{$\texttt{cond}' = true$}{
            $D'.\texttt{append}(\texttt{B}_{then})$ \tcp{Only true branch appended}
            $\sigma_q \gets \sigma_q.\texttt{update}(\texttt{B}_{then})$
        }
        \uElseIf{$\texttt{cond}' = false$}{
            $D'.\texttt{append}(\texttt{B}_{else})$ \tcp{Only false branch appended}
            $\sigma_q \gets \sigma_q.\texttt{update}(\texttt{B}_{else})$
        }
        \uElse{
            \tcp{Append operation controlled by $\texttt{cond}'$}
            $D'.\texttt{append}(\mathrm{if} \, \texttt{cond}' \,\mathrm{then} \, \texttt{B}_{then} \, \mathrm{else} \, \texttt{B}_{else})$\\
            $\sigma_q.\texttt{set\_top}(\texttt{B}_{then}, \texttt{B}_{else})$
        }
    }
    }
}
\end{algorithm}
The algorithm iterates over the instruction sequence $D.instructions$. Each instruction $inst$ in $D.instructions$ falls into one of four categories: unitary operations (\cref{par:unit_op}), and three dynamic (non-unitary) constructs, namely, mid-circuit measurements (\cref{par:mcm}), reset operations (\cref{par:resets}), and classically controlled operations (\cref{par:ccop}).
\begin{remark}
We write $\sigma_q(j)$ for the abstract quantum state associated with qubit $q_j$. If $q_j$ is not entangled with any other qubit, then $\sigma_q(j)$ denotes the single-qubit abstract state of $q_j$. Otherwise, when $q_j$ belongs to a non-trivial entanglement group, $\sigma_q(j)$ denotes the abstract state of the entanglement group containing $q_j$.
\end{remark}

\paragraph{Unitary operations}
\label{par:unit_op}
A unitary operation acts only on the quantum register $q$ and therefore leaves the classical abstract state $\sigma_c$ unchanged. We update the quantum abstract state $\sigma_q$ using QCP and append $inst$ to the output circuit $D'$.
\paragraph{Mid-circuit measurements}
\label{par:mcm}
The handling of a measurement instruction $inst \equiv \mathrm{measure}(q_j,c_i)$ depends on the abstract value $\sigma_q(j)$. If $\sigma_q(j)\in\{\ket{0},\ket{1}\}$, measuring $q_j$ in the computational basis leaves its state unchanged and writes a deterministic outcome to $c_i$. In this case, we drop the measurement from $D'$, keep $\sigma_q(j)$ unchanged, and let CCP update $\sigma_c(i)$ to $\ZERO$ or $\ONE$ accordingly, to indicate that the control value is statically determined. If instead $\sigma_q(j)=\top$, no useful quantum information is available about $q_j$ at compile time. The measurement cannot be simplified and is appended unchanged to $D'$, while CCP sets $\sigma_c(i)=\top$ to reflect the unknown value stored in $c_i$. Finally, consider the case where $\sigma_q(j)$ represents a superposition state (i.e., $\sigma_q(j)\notin\{\ket{0},\ket{1},\top\}$). Let $G=\sigma_q.\texttt{group}(q_j)$ be the set of qubits in the entanglement group of $q_j$; if $q_j$ is not entangled with any other qubit, then $G=\{q_j\}$. We remove $\mathrm{measure}(q_j,c_i)$ from $D'$ and replace it with a rotation unitary followed by a probabilistic gate-control binding $\mathfrak{P}[\dots](\dots)\rightarrow(\mathfrak{C}_i,q_j)$, as defined in \cref{def:gate-contr-bind}. After this rewrite, we discard quantum information for the entire group $G$ by setting $\sigma_q(k)$ to $\top$ for all $q_k \in G$, since the post-measurement state depends on the probabilistic instantiation. At the same time, CCP sets $\sigma_c(i)=\PROB$, indicating that the measurement outcome is represented by the probabilistic control $\mathfrak{C}_i$ through the gate-control binding.
\paragraph{Resets}
\label{par:resets}
When $inst \equiv \mathrm{reset}(q_j)$, its treatment is determined by the current abstract value $\sigma_q(j)$. If $\sigma_q(j)=\ket{0}$, the reset is redundant and is therefore omitted from $D'$, leaving $\sigma_q(j)$ unchanged. If $\sigma_q(j)=\ket{1}$, we replace $\mathrm{reset}(q_j)$ in $D'$ with a single-qubit unitary (e.g., an $X$ gate) that maps $\ket{1}$ to $\ket{0}$, thus reinitializing $q_j$. If $\sigma_q(j)=\top$, the reset cannot be rewritten and is copied unchanged to $D'$. When $q_j$ is in a superposition (i.e., $\sigma_q(j)\notin\{\ket{0},\ket{1},\top\}$), let $G=\sigma_q.\texttt{group}(q_j)$ be the set of qubits in the entanglement group of $q_j$. We apply the construction of \cref{subsec:ref_rep_mcmres}: we replace the reset with a rotation unitary acting on the whole group $G$ that maps its current state to $\ket{0}^{\otimes |G|}$, followed by a probabilistic gate whose branches all act as the identity on $q_j$, ensuring that $q_j$ ends in $\ket{0}$ in every instantiation.
The abstract state $\sigma_q$ is updated by conservatively discarding quantum information for the group $G$, i.e., setting $\sigma_q(k)=\top$ for all $q_k \in G$, since the post-reset state of the group depends on the probabilistic instantiation. In all cases, $\sigma_q(j)$ is finally set to $\ket{0}$ to reflect that $q_j$ is reinitialized by the reset.

\paragraph{Classically Controlled Operations}
\label{par:ccop}
When $inst$ is a classically controlled
operation $\mathrm{if} \; \texttt{cond} \;\mathrm{then} \; \texttt{B}_{then} \; \mathrm{else} \; \texttt{B}_{else}$ (\cref{fig:cls_ctrl_op}),
we use the simplification function $\mathcal{S}_{\sigma_{c}}$ defined in \cref{subsec:ccop-simp} that exploits the abstract state $\sigma_c$ to transform the condition $\texttt{cond}$ into a possibly simplified probabilistic condition $\texttt{cond}'$.
If $\mathcal{S}_{\sigma_c}(\texttt{cond})$ evaluates to $true$ or $false$, the condition can be eliminated entirely and the operation reduces to a single unconditional branch. In this case, we append the selected branch ($\texttt{B}_{then}$ if $true$, and $\texttt{B}_{else}$ if $false$) to $D'$ and update the quantum abstract state $\sigma_q$ by simulating the unitary operations in the appended branch with QCP.
Otherwise, $\mathcal{S}_{\sigma_c}(\texttt{cond})$ returns a probabilistic condition $\texttt{cond}' \notin \{true,false\}$. In this case, we append to $D'$ the classically controlled operation
$\mathrm{if}\ \texttt{cond}'\ \mathrm{then}\ \texttt{B}_{then}\ \mathrm{else}\ \texttt{B}_{else}$.
Since the executed branch depends on runtime information, we conservatively update $\sigma_q$ by discarding quantum information for all qubits affected by the operations in $\texttt{B}_{then}$ or $\texttt{B}_{else}$.
The condition $\texttt{cond}'$ may include probabilistic controls $\mathfrak{C}_i$, which are resolved in Phase~\textit{II} when probabilistic components are instantiated. Conversely, any parts of $\texttt{cond}$ that are statically determined at compile time are folded away in $\texttt{cond}'$.

\subsubsection{Phase \textit{II}: from probabilistic circuits to executable instances}
\label{subsub:phase2}
Phase~\textit{II} is responsible for turning the probabilistic circuit $D'$ into executable circuits. Concretely, it generates executable instances $D''$ by traversing $D'$ instructions and resolving every probabilistic component. The procedure for Phase~\textit{II} is summarized in \cref{alg:phase2}.
\begin{algorithm}
\SetInd{0.05em}{0.5em}
\caption{Phase~\textit{II} algorithm (\cref{subsub:phase2}).}
\label{alg:phase2}
\KwData{A probabilistic circuit $D' = (q, c)$ with $|q|=n$, $|c|=m$, and probabilistic controls $\mathfrak{C}_i$ for $1 \leq i \leq m$.}
\KwResult{An executable instance $D''$ of $D'$}
\ForEach{$inst \in D'.instructions$}{
    \Switch{$inst$}{
        \tcp{$l$-qubit probabilistic gate for resets (\cref{eq:probgate_for_res})}
        \uCase{$\mathfrak{P}[(I \otimes \mathcal{R}_{\ket{0}}, p_1), (I \otimes \mathcal{R}_{\ket{1}}, p_2)](q_j, q_{r_1}, \dots, q_{r_{l-1}})$}{
            $\texttt{choices} \gets [I \otimes \mathcal{R}_{\ket{0}}, I \otimes \mathcal{R}_{\ket{1}}]$ \\
            $\texttt{prob\_distr} \gets [p_1, \, p_2]$ \\
            $\texttt{choice} \gets \texttt{rand}(\texttt{choices}, \,\texttt{prob\_distr})$ \\
            $D''.\texttt{append}(\texttt{choice}(q_j, q_{r_1}, \dots, q_{r_{l-1}}))$
        }
        \tcp{$l$-qubit probabilistic gate-control binding (\cref{def:gate-contr-bind})}
        \uCase{$\mathfrak{P}[(I \otimes \mathcal{R}_{\ket{0}}, p_1), (X \otimes \mathcal{R}_{\ket{1}}, p_2)](q_j,q_{r_1},\dots, q_{r_{l-1}})
    \rightarrow
    (\mathfrak{C}_{i}, q_j)$}{
            $\texttt{choices} \gets [I \otimes \mathcal{R}_{\ket{0}}, X \otimes \mathcal{R}_{\ket{1}}]$ \\
            $\texttt{prob\_distr} \gets [p_1, p_2]$ \\
            $\texttt{choice} \gets \texttt{rand}(\texttt{choices}, \texttt{prob\_distr})$ \\
            $D''.\texttt{append}(\texttt{choice}(q_j, q_{r_1}, \dots, q_{r_{l-1}}))$ \\
            $\mathfrak{C}_i \gets \mathfrak{0} \;\textbf{if} \; \texttt{choice} = I \otimes \mathcal{R}_{\ket{0}} \; \textbf{else} \; \mathfrak{1}$
        }
        \uCase{$inst \equiv \mathrm{if} \; \texttt{cond}' \;\mathrm{then} \; \texttt{B}_{then} \; \mathrm{else} \; \texttt{B}_{else}$}{
            $\texttt{cond}'' \gets \mathcal{S}_{\mathfrak{C}}(\texttt{cond}')$ \tcp{Simplify $\texttt{cond}'$ (\cref{eq:simp-fun-prob-cond})}
        \uIf{$\texttt{cond}'' = true$}{
            $D''.\texttt{append}(\texttt{B}_{then})$ \tcp{Only true branch appended with no condition}
        }
        \uElseIf{$\texttt{cond}'' = false$}{
            $D''.\texttt{append}(\texttt{B}_{else})$ \tcp{Only false branch appended }
        }
        \uElse{
            \tcp{Append operation controlled by $\texttt{cond}''$}
            $D''.\texttt{append}(\mathrm{if} \, \texttt{cond}'' \,\mathrm{then} \, \texttt{B}_{then} \, \mathrm{else} \, \texttt{B}_{else})$
        }
        }
        \uCase{otherwise}{
            $D''.\texttt{append}(inst)$
        }
    }
}
\end{algorithm}
Unlike Phase~\textit{I}, Phase~\textit{II} performs no circuit simulation. During the traversal of $D'$, the only state we maintain is the current assignment of probabilistic controls. For the executable instance being generated, each probabilistic control $\mathfrak{C}_i$ takes a value in $\{\mathfrak{0},\mathfrak{1}\}$, and this value is set whenever the corresponding probabilistic gate-control binding is resolved, i.e., when the associated probabilistic gate is instantiated.
These assignments are then used to resolve probabilistic conditions. 
To this end, we define a simplification function
\begin{equation}
\label{eq:simp-fun-prob-cond}
\mathcal{S}_{\mathfrak{C}} : \mathcal{E}' \to \{true,false\} \cup \mathcal{E},
\end{equation}
where $\mathcal{E}'$ is the set of probabilistic conditions (\cref{def:prob-cond}) and $\mathcal{E}$ is the set of Boolean expressions over classical bits only. Given a probabilistic guard $\texttt{cond}'$ and the current assignment of probabilistic controls, $\mathcal{S}_{\mathfrak{C}}$ first replaces each occurrence of $\mathfrak{C}_i$ with $false$ if $\mathfrak{C}_i=\mathfrak{0}$ and with $true$ if $\mathfrak{C}_i=\mathfrak{1}$. It then simplifies the resulting expression by short-circuit identities and constant folding as in $\mathcal{S}_{\sigma_c}$ (\cref{subsec:ccop-simp}), returning either $true$ or $false$ (thus statically evaluating the guard at this point) or a residual condition in $\mathcal{E}$, which is preserved in $D''$.

Phase~\textit{II} explicitly processes only instructions $inst$ in $D'.instructions$ involving probabilistic constructs, namely probabilistic gates (\cref{par:prob-gate}), probabilistic gate-control bindings (\cref{par:prob-gate-ctrl-bind}), and operations controlled by probabilistic conditions (\cref{par:ccop-ph2}).
\paragraph{Probabilistic gates}
\label{par:prob-gate}
If $inst$ is a probabilistic gate introduced by the reset rewriting,
$\mathfrak{P}[(I \otimes \mathcal{R}_{\ket{0}}, p_1), (I \otimes \mathcal{R}_{\ket{1}}, p_2)](q_j,q_{r_1},\dots, q_{r_{l-1}})$,
Phase~\textit{II} resolves it by sampling one of its two alternatives according to the distribution $(p_1,p_2)$. The sampled unitary, either $I \otimes \mathcal{R}_{\ket{0}}$ or $I \otimes \mathcal{R}_{\ket{1}}$, is then applied to $(q_j,q_{r_1},\dots, q_{r_{l-1}})$ and appended to $D''$.

\paragraph{Probabilistic gate-control bindings}
\label{par:prob-gate-ctrl-bind}
If $inst$ is a probabilistic gate-control binding introduced by the mid-circuit measurement rewriting, $
\mathfrak{P}[(I \otimes \mathcal{R}_{\ket{0}}, p_1), (X \otimes \mathcal{R}_{\ket{1}}, p_2)]
(q_j, q_{r_1}, \dots, q_{r_{l-1}})
\rightarrow (\mathfrak{C}_i, q_j)$,
then we sample one of the two unitary alternatives of $\mathfrak{P}$ according to $(p_1,p_2)$ and append the sampled operation to the instance $D''$.
At the same time, we update the associated probabilistic control $\mathfrak{C}_i$ to $\mathfrak{0}$ if $I \otimes \mathcal{R}_{\ket{0}}$ is chosen, and to $\mathfrak{1}$ if $X \otimes \mathcal{R}_{\ket{1}}$ is chosen, as in~\cref{eq:prob-ctrl-update}.
\paragraph{Classically controlled operations}
\label{par:ccop-ph2}
In this case, $inst~\equiv~ \mathrm{if} \; \texttt{cond}' \;\mathrm{then} \; \texttt{B}_{then} \; \mathrm{else} \; \texttt{B}_{else}$,
with a probabilistic condition $\texttt{cond}'$. We simplify $\texttt{cond}'$ with $\mathcal{S}_{\mathfrak{C}}$ (\cref{eq:simp-fun-prob-cond}) under the current assignment of probabilistic controls. 
If $\mathcal{S}_{\mathfrak{C}}(\texttt{cond}')$ evaluates to $true$ or $false$, we unconditionally append to $D''$ only the branch $\texttt{B}_{then}$ or $\texttt{B}_{else}$, respectively.
Otherwise, $\mathcal{S}_{\mathfrak{C}}(\texttt{cond}')$ returns a residual Boolean expression $\texttt{cond}''$ over classical controls, and we append to $D''$ the guarded operation $\mathrm{if} \; \texttt{cond}'' \;\mathrm{then} \; \texttt{B}_{then} \; \mathrm{else} \; \texttt{B}_{else}$.

Instructions that do not involve probabilistic constructs, such as
unitary gates, resets, and measurements that were left unchanged in $D'$ after
Phase~\textit{I}, are simply copied to $D''$.

At the end of Phase~\textit{II}, all probabilistic gates in $D'$ have been instantiated and all probabilistic conditions have been simplified as far as permitted by the sampled values of the probabilistic controls $\mathfrak{C}_i$, yielding an executable circuit instance $D''$ with no remaining probabilistic components. Re-running Phase~\textit{II} on the same $D'$ generally produces different executable instances $D''$.
\begin{remark}[Runtime analysis of the method]
The runtime cost of the proposed compilation pipeline is dominated by Phase~\textit{I}. In particular, the dominant costs arise from the propagation of quantum information and from the synthesis of the unitary rotations used to replace measurements and resets with probabilistic constructs. In the worst case, these components can be exponential in the number of qubits.
In practice, however, for quantum-state propagation, we use QCP that tracks entanglement groups only up to a fixed size threshold, as discussed in \cref{subsec:qcp}. This threshold also limits the cost of unitary synthesis, since it bounds the size of the states that may need to be synthesized. Moreover, as discussed in \cref{subsec:comp-cost-rewriting}, one may additionally follow the approach of~\cite{big_prob} and impose a threshold on the size of the states to be synthesized, above which a measurement or reset is left unchanged instead of being rewritten.
A detailed runtime analysis of these components is provided in~\cite{big_prob, qcp}.
Phase~\textit{II} only instantiates probabilistic constructs by scanning the instruction sequence and therefore adds linear overhead that does not affect the overall complexity.

\end{remark}

\subsection{Correctness of the method}
\label{sec:method-correctness}
Our compilation pipeline preserves the semantics of the input dynamic circuit $D$ in terms of the probability distribution over its execution paths. A single execution of $D$ follows one execution path, and the possible execution paths, together with their probabilities, are determined by the non-unitary operations.
In particular, mid-circuit measurements, resets, and classically controlled operations introduce branching in the evolution of the quantum state.
Each such branching point gives rise to multiple execution paths.
Our method preserves all execution paths together with their associated probabilities: Phase~\textit{I} constructs a probabilistic circuit $D'$ that represents exactly the execution paths of $D$, while Phase~\textit{II} generates executable instances from $D'$ according to the same probability distribution.

\begin{theorem}
\label{theorem}
Let $D'$ be the probabilistic circuit produced by Phase~\textit{I} from a dynamic circuit $D$. Then $D'$ represents exactly the execution paths of $D$, preserving the corresponding quantum states and their associated probabilities.
\end{theorem}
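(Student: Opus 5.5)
Since Phase~\textit{I} processes $D$ instruction by instruction, I would prove the statement by induction on the length of the instruction sequence. The semantics of a (dynamic or probabilistic) circuit prefix is taken to be a finite ensemble of \emph{execution paths}. Each path is a triple $(p, \ket{\Psi}, \gamma)$: a probability, a pure joint state of $q$, and an assignment $\gamma$ of values to the classical register. For $D'$, the assignment also covers the probabilistic controls $\mathfrak{C}_i$, and the sampling of probabilistic gates plays the role of branching. The correspondence between the two ensembles identifies $c_i$ with $\mathfrak{C}_i$ whenever $\sigma_c(i)=\PROB$. The inductive invariant after processing a prefix $D_{\le t}$ has two parts. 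First, there is a probability-preserving bijection between the paths of $D_{\le t}$ and those of $D'_{\le t}$ that matches quantum states and matched classical values. Second, the abstract states are sound: whenever $\sigma_q(j)$ is not $\top$, every path of $D_{\le t}$ has the corresponding group in the state $\sigma_q(j)$, taken as a product factor. Similarly, $\sigma_c(i)\in\{\ZERO,\ONE\}$ means $c_i$ has that constant value on all paths, and $\sigma_c(i)=\PROB$ means $c_i$ equals the value of $\mathfrak{C}_i$ on every matched path. Soundness of QCP~\cite{qcp} is taken as given for unitary updates. The base case is immediate from the initialization.

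The inductive step splits by the four cases of \cref{alg:phase1}. For unitary operations, both circuits apply the same unitary and QCP soundness preserves the invariant. For a measurement, there are three sub-cases. When $\sigma_q(j)=\top$, the instruction is copied verbatim, so paths split identically. When $\sigma_q(j)\in\{\ket{0},\ket{1}\}$, soundness says each path has a single outcome with probability $1$ and the state is unchanged, so dropping the measurement is exact and setting $\sigma_c(i)$ to a constant is sound. When $q_j$ is in superposition, soundness gives the factorization of every path's state as $\ket{\psi}_G\otimes\ket{\chi}$. I would verify directly that $R_{\ket{\psi}\to\ket{0}^{\otimes|G|}}$ followed by the instantiated branch of \cref{eq:probgate_for_meas} produces $\ket{0}\ket{\phi_0}$ with probability $|\lambda_0|^2$ and $\ket{1}\ket{\phi_1}$ with probability $|\lambda_1|^2$. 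This is exactly the Born rule together with the post-measurement states from the decomposition in \cref{subsub:patt_mcm_emu}, up to the correct global phases. The binding of \cref{def:gate-contr-bind} sets $\mathfrak{C}_i$ to the outcome, which justifies $\sigma_c(i)=\PROB$. Resets are handled the same way using \cref{eq:probgate_for_res}, where the $\ket{1}$ branch of a reset is $X$ applied to the post-measurement state.

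For a classically controlled operation, I would first prove a lemma by structural induction on $E$. The lemma says that under a sound $\sigma_c$, on every path the truth value of $E$ under $\gamma$ equals that of $\mathcal{S}_{\sigma_c}(E)$ under the matched assignment (with $\mathfrak{C}_i$ interpreted by its value). This amounts to checking that the short-circuit and constant-folding identities are Boolean tautologies. Given the lemma, the $true$/$false$ cases append exactly the branch every path would take. The residual case is faithful on each path because the same branch is selected, and setting the touched qubits to $\top$ keeps soundness trivially. Finally, the remark that $\mathcal{S}_{\mathfrak{C}}$ performs the same substitution connects the representation to Phase~\textit{II} sampling.

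The main obstacle is the soundness invariant in the superposition case, specifically the claim that $\sigma_q$ describes the group state \emph{on every path}. After earlier probabilistic rewrites, paths differ, and the induction needs every non-$\top$ abstract value to be path-independent. This holds because each branching step sets the affected group (or touched qubits) to $\top$, but it must be stated carefully, including for groups later merged by entangling gates with $\top$ qubits. I would handle this by leaning on QCP's convention that interaction with $\top$ yields $\top$.
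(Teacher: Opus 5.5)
Your proposal is correct and takes the same route as the paper: induction over the instruction sequence, with a case analysis of each Phase~\textit{I} rewrite (unitaries, measurements, resets, and guards rewritten by $\mathcal{S}_{\sigma_c}$), each shown to preserve the execution paths. It is more careful than the paper's ``by construction'' argument, since you make explicit what the paper leaves implicit: every non-$\top$ value of $\sigma_q$ holds on all paths (this is what lets a single compile-time rotation $R_{\ket{\psi}\rightarrow\ket{0}^{\otimes |G|}}$ be correct), $\PROB$ bits coincide with their $\mathfrak{C}_i$, and there is a structural lemma for $\mathcal{S}_{\sigma_c}$. Just make sure the residual-guard case sends the entire entanglement groups of the touched qubits to $\top$, not only the touched qubits, as your last paragraph already anticipates.
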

\begin{proof}
   The proof proceeds by induction on the circuit instructions, showing that each rewrite rule used in Phase~\textit{I} is semantics-preserving by construction.
    \begin{itemize}
        \item \textit{Resets.} A reset of a qubit $q_j$ is replaced only when the abstract state provides enough information about $q_j$. If $q_j$ is already in state $\ket{0}$, the reset is omitted, while if $q_j$ is in state $\ket{1}$, it is replaced by a gate mapping $\ket{1}$ to $\ket{0}$, preserving the same effect as the reset. Otherwise, the reset is replaced with a probabilistic gate (\cref{subsub:patt_res_emu}) that encodes the same post-reset states, with the same probabilities, as in the original dynamic circuit. Hence, the rewritten fragment preserves the same execution paths, together with the resulting quantum states.
        \item \textit{Measurements.}
        Consider the operation $\mathrm{measure}(q_j,c_i)$. If $q_j$ is in state $\ket{0}$ or $\ket{1}$, the measurement does not alter the quantum state and only writes $0$ or $1$, respectively, to $c_i$. In this case, the measurement is omitted, and CCP updates the abstract state of $c_i$ accordingly. If no information about $q_j$ is available, the measurement is kept unchanged and CCP marks $c_i$ as unknown. Otherwise, the measurement is replaced by a probabilistic gate-control binding, which encodes the possible outcomes with the corresponding probabilities and post-measurement states. Moreover, CCP updates the abstract classical state to reflect that $c_i$ is replaced with its associated probabilistic control, whose value is determined by the probabilistic gate-control binding. Hence, the rewritten fragment preserves the same execution paths, together with the corresponding post-measurement states, while CCP consistently tracks the measurement outcomes.
        \item \textit{Classically controlled operations.}
        The simplification function $\mathcal{S}_{\sigma_c}$ (\cref{eq:simp-fun-cond}) uses the abstract classical state maintained by CCP, which is consistent with the measurement outcomes as discussed above. Therefore, if $\mathcal{S}_{\sigma_c}$ evaluates to $true$ or $false$, the selected branch is exactly the one taken at runtime in the original circuit, and the classical condition can be removed without changing the execution path.
        Otherwise, $\mathcal{S}_{\sigma_c}$ returns a residual guard that contains the part of the condition that cannot be resolved at compile time: classical bits whose value remains unknown are kept, while bits marked as probabilistic by CCP are replaced by their corresponding probabilistic controls. All parts of the guard whose value is already determined are folded away by $\mathcal{S}_{\sigma_c}$.
        Thus, the rewritten classically controlled operation preserves the same behavior.
    \end{itemize}
Since each rewriting maintains the same post-operation quantum states of the replaced non-unitary operation, together with their corresponding probabilities,  the execution paths of $D$ are preserved in $D'$ with the same probability distribution.
\end{proof}
By \Cref{theorem}, Phase~\textit{I} builds a probabilistic circuit $D'$ that represents the execution paths of $D$. Phase~\textit{II} then resolves the probabilistic components introduced in Phase~\textit{I}.
More precisely, probabilistic gates and probabilistic gate-control bindings are sampled according to the same probability distributions as the corresponding rewritten resets and measurements, and probabilistic controls are resolved consistently with their associated probabilistic gate-control bindings. As a result, each executable instance $D''$ produced in Phase~\textit{II} corresponds to one execution path of $D$, with the corresponding quantum state and probability. Hence, the overall compilation pipeline preserves the semantics of $D$.

\section{Evaluation}\label{sec:evaluation}
The goal of this evaluation is to assess how effectively the proposed method simplifies dynamic circuits. We benchmark it on a large dataset of randomly generated dynamic circuits (\cref{subsec:rand-circ-eval}) and complement these results with a case study based on a dynamic circuit from the literature (\cref{subsec:dem-examp}).

We implemented our framework in Python using Qiskit as the underlying representation for quantum circuits \cite{ibmOverviewQuantum, ibmLibrarylatest}.
The source code of the framework is publicly available at \url{https://github.com/i2-tum/pcm-ccop-dc-pass}.
The framework takes as input a dynamic circuit represented as a Qiskit \texttt{QuantumCircuit} object and produces a probabilistic intermediate circuit $D'$ (Phase~\textit{I}) as well as executable instances $D''$ (Phase~\textit{II}), returned as \texttt{QuantumCircuit} objects, obtained by resolving probabilistic constructs.

\subsection{Evaluation on random circuits}
\label{subsec:rand-circ-eval}
\begin{figure}[t]
  \centering
  \subfloat[]{%
    \resizebox{0.495\linewidth}{!}{\input{images/ctrl_ops_vs_nqubits.pgf}}%
    \label{subfig:ctrl_ops_vs_nqubits}%
  }\hfill
  \subfloat[]{%
    \resizebox{0.495\linewidth}{!}{\input{images/ctrl_ops_vs_depth.pgf}}%
    \label{subfig:ctrl_ops_vs_depth}%
  }

  \caption{Number of classically controlled operations in the original circuits and in the optimized probabilistic circuits produced by Phase~\textit{I}. Values are averages over $10$ random circuits per setting. \cref{subfig:ctrl_ops_vs_nqubits} varies the number of qubits (width) from $w=10$ to $w=100$ with fixed depth $d=250$, while \cref{subfig:ctrl_ops_vs_depth} varies the depth from $d=100$ to $d=1000$ (step $100$) with fixed width $w=25$.}
  \label{fig:exper}
\end{figure}
\subsubsection{Benchmarks}
\label{subsubsec:bench}
To obtain a large and structurally diverse benchmark, we generate random dynamic circuits by extending Qiskit's \texttt{random\_circuit} utility from the \texttt{qiskit.circuit.random} module, which produces static circuits of fixed width and depth \cite{ibmRandomlatest}. Our extended random circuit generator allows to inject mid-circuit measurements, creating intermediate bit values, classically controlled operations
with guards drawn from the three syntactic patterns depicted in \cref{subsec:dc_ccop} that use these measurement outcomes, and resets to reinitialize the measured qubits. 
We generate $10$ circuits for each configuration, varying the number of qubits from $10$ to $100$ and depth from $100$ to $1000$.
\subsubsection{Metric}
We measure the impact of the optimization by counting how many classically controlled operations become independent of runtime classical information. Concretely, we count a classically controlled operation as removed whenever Phase~\textit{I} rewrites its guard $\texttt{cond}$ into $\texttt{cond}'=\mathcal{S}_{\sigma_c}(\texttt{cond})$ such that $\texttt{cond}'$ contains no classical bits, i.e., $\texttt{cond}'\in\{true,false\}$ or $\texttt{cond}'$ is a Boolean expression over probabilistic controls $\mathfrak{C}_i$ only. In this case, the corresponding executable instances $D''$ require no runtime classical feedforward, and the guarded construct is resolved into an unconditional execution of exactly one branch.
This metric is conservative: we only count removals that are guaranteed after Phase~\textit{I}. In Phase~\textit{II}, hybrid guards that still contain classical bits and probabilistic controls (e.g., $c_1 \land \mathfrak{C}_2$) may simplify further for specific sampled instances (e.g., if $\mathfrak{C}_2=\mathfrak{0}$ then the guard becomes $false$). We do not perform instance-by-instance analysis, and report only this conservative removal count.
\subsubsection{Results}
\label{subsub:exp-res}
Figure~\ref{fig:exper} summarizes the results where each data point reports the mean number of classically controlled operations over $10$ different circuits for the same setting.

In \cref{subfig:ctrl_ops_vs_nqubits} we fix the depth to $d=250$ and vary the circuit width from $w=10$ to $w=100$. We observe that the fraction of removed operations (i.e., the gap between the two curves) increases with $w$. This trend is explained by how much quantum information can be retained during Phase~\textit{I}. Whenever a measurement is rewritten into a probabilistic gate-control binding, \cref{alg:phase1} conservatively sets the abstract state of all qubits in the corresponding entanglement group to $\top$. Once a qubit is mapped to $\top$, subsequent measurements on that qubit cannot be emulated via probabilistic bindings, limiting the propagation of classical information needed to simplify later conditions. At fixed depth, increasing $w$ spreads the sampled operations over a larger register: on average, each qubit participates in fewer interactions, and the same subset of qubits is less likely to interact repeatedly. This reduces the formation of large entanglement groups and reduces the propagation of $\top$, thereby increasing the removal rate.

In \cref{subfig:ctrl_ops_vs_depth}, we vary the depth from $100$ to $1000$ in steps of $100$ while keeping the width fixed to $w=25$. Unlike \cref{subfig:ctrl_ops_vs_nqubits}, the fraction of removed operations remains comparatively stable and slightly decreases as depth increases. At fixed width, deeper circuits tend to create larger and more persistent entanglement groups, which limits the propagation of classical information for condition simplification. Nevertheless, the removal rate does not collapse at large depths: reset operations periodically reinitialize qubits, and thus effectively restart information propagation, enabling further condition simplifications even in very deep circuits.

From \cref{fig:exper}, it can be observed that Phase~\textit{I} removes a substantial fraction of classically controlled operations, thereby reducing the amount of runtime classical feedforward. In \cref{subfig:ctrl_ops_vs_nqubits} (fixed depth $d=250$), for $w \ge 80$ the number of controlled operations drops from roughly $380$--$450$ to about $90$--$100$, corresponding to an elimination of around $75\%$--$80\%$. In \cref{subfig:ctrl_ops_vs_depth} (fixed width $w=25$), the removal rate is more stable and stays in the $35\%$--$50\%$ range. Overall, these reductions can translate into lower execution latency by decreasing the overhead for classical feedforward. While this shifts additional compilation work to the classical computer, it yields simpler quantum circuits to be run on quantum devices, which may be desirable in practice since quantum resources are currently more limited than classical ones.

\subsection{Demonstrative example}
\label{subsec:dem-examp}
We demonstrate the effect of our optimization pass on a dynamic-circuit algorithm from the literature for preparing GHZ states \cite{ghz}.
The example illustrates how our two-phase procedure removes classical feedforward when sufficient compile-time information is available, while conservatively preserving dynamic components otherwise.

We consider a $5$-qubit dynamic circuit for GHZ-state preparation on qubits $q_1, q_2, q_3, q_4, q_5$. The circuit starts with a unitary prefix that prepares a uniform superposition over the $8$ computational-basis assignments of $q_1,q_3,q_5$, while encoding the corresponding parity information into $q_2$ and $q_4$. This superposition over the qubits $q_1, q_2, q_3, q_4, q_5$ is the state:
\begin{equation*}
\begin{split}
\ket{\psi}
&=
\frac{1}{\sqrt{8}}\sum_{x_1,\, x_3,\, x_5 \,\in\,\{0,1\}}
\\
&\quad
\ket{x_1}_{q_1}\ket{x_1\oplus x_3}_{q_2}\ket{x_3}_{q_3}
\ket{x_3\oplus x_5}_{q_4}\ket{x_5}_{q_5}.
\end{split}
\end{equation*}
The dynamic circuit $D = (q, c)$ is defined as:

\begin{center}
\resizebox{0.55\linewidth}{!}{%
\begin{quantikz}[row sep=0.35cm, column sep=0.55cm, wire types={q,q,q,q,q,c,c}]
  \lstick{$q_1$} & \qw & \qw & \qw & \qw & \ctrl{1} & \qw
\\
  \lstick{$q_2$} & \meter{c_1}\wire[d][4]{c} & \qw & \qw & \gate{\mathrm{\ket{0}}} & \targ{} & \qw
\\
  \lstick{$q_3$} & \qw & \qw & \gate{X}\gategroup[
      wires=1,
      steps=1,
      style={draw,inner xsep=3pt,inner ysep=2pt},
      label style={label position=above}
    ]{\textbf{if}$\,c_1$} & \qw & \ctrl{1} & \qw
\\
  \lstick{$q_4$} & \qw & \meter{c_2}\wire[d][2]{c} & \qw & \gate{\mathrm{\ket{0}}} & \targ{} & \qw
\\
  \lstick{$q_5$} & \qw & \qw & \gate{X}\gategroup[
      wires=1,
      steps=1,
      style={draw,inner xsep=3pt,inner ysep=2pt},
      label style={label position=above}
    ]{\textbf{if}$\,c_2 \oplus c_1$} & \qw & \qw & \qw
\\
\lstick{$c$}
  & \cw & \cw & \cw & \cw & \cw & \cw & \cw
\end{quantikz}
}
\end{center}
where the quantum register $q$ is initialized in the joint state $\ket{\psi}$.
Collecting terms with respect to $q_2$ and normalizing, we can write
$
\ket{\psi}
=\frac{1}{\sqrt{2}}
(\ket{0}_{q_2}\ket{\phi_0}_{q_1 q_3 q_4 q_5}
+\ket{1}_{q_2}\ket{\phi_1}_{q_1 q_3 q_4 q_5}),
$
where $\ket{\phi_0}$, $\ket{\phi_1}$ are normalized states on $q_1, q_3, q_4, q_5$, and both are superpositions of four computational basis states.
Consequently, measuring $q_2$ yields outcome $0$ or $1$ with probability $1/2$, and the post-measurement state is
$\ket{0}_{q_2}\ket{\phi_0}$ or $\ket{1}_{q_2}\ket{\phi_1}$, respectively. Applying the Phase~\textit{I} algorithm (\cref{alg:phase1}) to $D$ produces the probabilistic circuit $D'$:
\begin{center}
\resizebox{0.95\linewidth}{!}{%
\begin{quantikz}[row sep=0.35cm, column sep=0.55cm, wire types={q,q,q,q,q,c}]
  \lstick{$q_1$} & \gate[wires=5]{R_{\ket{\psi} \rightarrow \ket{0}^{\otimes 5}}} & \gate[
  wires=5,
  style={dashed}
]{
\begin{array}{c}
\mathfrak{P}[(I \otimes \mathcal{R}_{\ket{0}}, \tfrac{1}{2}),
(X \otimes \mathcal{R}_{\ket{1}}, \tfrac{1}{2})]\\(q_2, q_1, q_3, q_4, q_5)
\rightarrow (\mathfrak{C}_1, q_2)
\end{array}
} & \qw & \qw & \qw & \ctrl{1} & \qw
\\
  \lstick{$q_2$} &                   &                                  & \qw & \qw & \gate{\mathrm{\ket{0}}} & \targ{} & \qw
\\
  \lstick{$q_3$} &                   &                                  & \qw & \gate{X}\gategroup[
      wires=1,
      steps=1,
      style={draw,inner xsep=3pt,inner ysep=2pt},
      label style={label position=above}
    ]{\textbf{if}$\;\mathfrak{C}_1$} & \qw & \ctrl{1} & \qw
\\
  \lstick{$q_4$} &                   &                                  & \meter{c_2}\wire[d][2]{c} & \qw & \gate{\mathrm{\ket{0}}} & \targ{} & \qw
\\
  \lstick{$q_5$} &                   &                                  & \qw & \gate{X}\gategroup[
      wires=1,
      steps=1,
      style={draw,inner xsep=3pt,inner ysep=2pt},
      label style={label position=above}
    ]{\textbf{if}$\,c_2 \oplus \mathfrak{C}_1$} & \qw & \qw & \qw
\\
  \lstick{$c$} & \cw & \cw & \cw & \cw & \cw & \cw & \cw
\end{quantikz}
}
\end{center}
Phase~\textit{I} replaces $D.\mathrm{measure}(q_2, c_1)$ with a probabilistic gate-control binding $D'.\mathfrak{P}[\cdot](q_2,\dots)\rightarrow(\mathfrak{C}_1,q_2)$, shown as the dashed box in the circuit diagram, which captures the two post-measurement branches. Subsequent classical conditions are then rewritten to depend on $\mathfrak{C}_1$ instead of $c_1$. The measurement of $q_4$ is kept unchanged in $D'$, because, after removing the first mid-circuit measurement,
QCP conservatively loses track of the entanglement group containing $q_2$, and hence also $q_4$, which implies $\sigma_q(4)=\top$.

Phase~\textit{II} (\cref{alg:phase2}) samples the probabilistic gate in $D'$, producing an executable instance $D''$ where the sampled branch sets $\mathfrak{C}_1$ to either $\mathfrak{0}$ or $\mathfrak{1}$ and simplifies the remaining conditions. When the probabilistic gate-control binding is instantiated as $(X \otimes \mathcal{R}_{\ket{1}})$, then $\mathfrak{C}_1 \gets \mathfrak{1}$. Consequently, the guard $\mathfrak{C}_1$ becomes $true$, $X(q_3)$ is applied unconditionally, and $c_2 \oplus \mathfrak{C}_1$ simplifies to $\neg c_2$, yielding the executable instance:
\begin{center}
\resizebox{0.8\linewidth}{!}{%
\begin{quantikz}[row sep=0.35cm, column sep=0.55cm, wire types={q,q,q,q,q,c}]
  \lstick{$q_1$} 
    & \gate[wires=5]{R_{\ket{\psi} \rightarrow \ket{0}^{\otimes 5}}}
    & \gate[wires=5]{
      \begin{array}{c}
      (X \otimes \mathcal{R}_{\ket{1}})\\
      (q_2, q_1, q_3, q_4, q_5)
      \end{array}
    }
    & \qw 
    & \qw 
    & \qw 
    & \ctrl{1} 
    & \qw
\\
  \lstick{$q_2$} 
    & 
    & 
    & \qw 
    & \qw 
    & \gate{\mathrm{\ket{0}}} 
    & \targ{} 
    & \qw
\\
  \lstick{$q_3$} 
    & 
    & 
    & \qw 
    & \gate{X} 
    & \qw 
    & \ctrl{1} 
    & \qw
\\
  \lstick{$q_4$} 
    & 
    & 
    & \meter{c_2}\wire[d][2]{c} 
    & \qw 
    & \gate{\mathrm{\ket{0}}} 
    & \targ{} 
    & \qw
\\
  \lstick{$q_5$} 
    & 
    & 
    & \qw 
    & \gate{X}\gategroup[
        wires=1,
        steps=1,
        style={draw,inner xsep=3pt,inner ysep=2pt},
        label style={label position=above}
      ]{\textbf{if}$\;\neg c_2$}
    & \qw 
    & \qw 
    & \qw
\\
  \lstick{$c$} 
    & \cw & \cw & \cw & \cw & \cw & \cw & \cw
\end{quantikz}
}
\end{center}
On the other hand, when the probabilistic gate-control binding is resolved as $(I \otimes \mathcal{R}_{\ket{0}})$, then $\mathfrak{C}_1 \gets \mathfrak{0}$. The guard $\mathfrak{C}_1$ becomes $false$, $X(q_3)$ can be discarded, and $c_2 \oplus \mathfrak{C}_1$ simplifies to $c_2$.
Thus, the produced executable instance is the circuit:
\begin{center}
\resizebox{0.8\linewidth}{!}{%
\begin{quantikz}[row sep=0.35cm, column sep=0.55cm, wire types={q,q,q,q,q,c}]
  \lstick{$q_1$} 
    & \gate[wires=5]{R_{\ket{\psi} \rightarrow \ket{0}^{\otimes 5}}}
    & \gate[wires=5]{
      \begin{array}{c}
      (I \otimes \mathcal{R}_{\ket{0}})\\
      (q_2, q_1, q_3, q_4, q_5)
      \end{array}
    }
    & \qw 
    & \qw 
    & \qw 
    & \ctrl{1} 
    & \qw
\\
  \lstick{$q_2$} 
    & 
    & 
    & \qw 
    & \qw 
    & \gate{\mathrm{\ket{0}}} 
    & \targ{} 
    & \qw
\\
  \lstick{$q_3$} 
    & 
    & 
    & \qw 
    & \qw
    & \qw 
    & \ctrl{1} 
    & \qw
\\
  \lstick{$q_4$} 
    & 
    & 
    & \meter{c_2}\wire[d][2]{c} 
    & \qw 
    & \gate{\mathrm{\ket{0}}} 
    & \targ{} 
    & \qw
\\
  \lstick{$q_5$} 
    & 
    & 
    & \qw 
    & \gate{X}\gategroup[
        wires=1,
        steps=1,
        style={draw,inner xsep=3pt,inner ysep=2pt},
        label style={label position=above}
      ]{\textbf{if}$\;c_2$}
    & \qw 
    & \qw 
    & \qw
\\
  \lstick{$c$} 
    & \cw & \cw & \cw & \cw & \cw & \cw & \cw
\end{quantikz}
}
\end{center}

\section{Conclusion and Future Work}\label{sec:conclusion}
We presented a compile-time optimization framework for reducing classical feedforward in dynamic circuits.
The key idea is to replace part of the runtime classical feedback loop between the QPU and the controller with compile-time rewrites, so that fewer decisions require measurement outcomes to be routed out, processed, and sent back as control signals. As a result, the optimized circuits reduce classical-control overheads and can mitigate latency on hardware. We evaluated the approach on a large dataset of random dynamic circuits and on a representative GHZ-state preparation routine. 
As future work, it would be useful to equip the optimization pass with an explicit, hardware-aware cost model (e.g., mid-circuit measurement overhead and classical feedforward latency) to quantify its benefits on specific target devices.


\section*{Acknowledgments}
The research is part of the Munich Quantum Valley (MQV), which is supported by the Bavarian state government with funds from the Hightech Agenda Bayern Plus.

\bibliographystyle{ieeetr}
\bibliography{references}

\end{document}